\journal{Confrence Proceedings}
\newtheorem{lemma}{Lemma}
\newtheorem{theorem}{Theorem}
\begin{document}
\title{General expansion of natural power of linear combination of Bosonic operators in normal order}
\author{Deepak}
\ead{deepak20dalal19@gmail.com}

\author{Arpita Chatterjee\corref{cor1}}
\ead{arpita.sps@gmail.com}

\cortext[cor1]{Corresponding author}
\date{\today}
\address{Department of Mathematics, J. C. Bose University of Science and Technology,\\ YMCA, Faridabad 121006, India}
\begin{abstract}
{ In quantum mechanics, bosonic operators are mathematical objects that are used to represent the creation ($a^\dag$) and annihilation ($a$) of bosonic particles. The natural power of a linear combination of bosonic operators represents an operator $(a^\dagger x+ay)^n$ with $n$ as the exponent and $x,\,y$ are the variables free from bosonic operators. The normal ordering of these operators is a mathematical technique that arranges the operators so that all the creation operators are to the left of the annihilation operators, reducing the number of terms in the expression. In this paper, we present a general expansion of the natural power of a linear combination of bosonic operators in normal order. We show that the expansion can be expressed in terms of binomial coefficients and the product of the normal-ordered operators using the direct method and than prove it using the fundamental principle of mathematical induction. We also derive a formula for the coefficients of the expansion in terms of the number of bosons and the commutation relation between the creation and annihilation operators.  Our results have important applications in the study of many-body systems in quantum mechanics, such as in the calculation of correlation functions and the evaluation of the partition function. The general expansion presented in this paper provides a powerful tool for analyzing and understanding the behavior of bosonic systems, and can be applied to a wide range of physical problems. }
\end{abstract}
\begin{keyword}
{boson operators; normal ordering; antinormal ordering; Hermite polynomial, binomial coefficients, etc.}
\end{keyword}
\maketitle
\section{Introduction}

In the quantum theoretic illustration of light \cite{ref-gsa}, the state of a photon field is entirely described by the corresponding density operator, which consists of a mixture of the bosonic operators denoted by  photon annihilation (creation) operators $a (a^\dagger)$, \cite{walls} having the well-known commutation relation $aa^\dagger-a^\dagger a = [a,~a^\dagger] = 1$. Various achievable approaches to represent the density operator of the electromagnetic field have been illustrated in \cite{walls}. In general, the Fock states as it is a complete set, can be used to expand the density operator as $\rho=\sum_{s_1,s_2}c_{s_1,s_2}\ket{s_1}\bra{s_2}$. The coherent states, irrespective of the fact that they form a non-orthogonal set, can be used as a basis to represent the density operator diagonally. The coherent state facilitates a number of probable expressions of the density matrix $\rho$ in terms of the Wigner function \cite{wigner}, the $P$-function \cite{glauber,ecg} and the $Q$-function \cite{husimi}. In each transformation, we need to organize boson operators \cite{ref-boson} according to a specific order. For instance, if the density operator is normally (antinormally) ordered, one can obtain the corresponding $Q$-function ($P$-function) instantly. Adjustment of operators is also performed worldwide for finding out miscellaneous representative states of the optical field and calculating the expected values of various operators in these states, such as normal ordered squeezing operator is needed while constructing squeezed states \cite{dodo}. In quantum mechanics and quantum field theory, people deals with solutions to quantum problems involving different combinations of bosonic operators \cite{louisell}. This includes, for example, managing many complicated operators which have various commutation relations with other operators, calculating the expectations of the operators which are functions of $a$ and $a^\dagger$, etc. In addition, the non-commutativity between $a$ and $a^\dagger$ causes more complexity in operator handling. Because of the non-commutativity of bosonic operators, simple sequences of first adding and then subtracting, as well as first subtracting and then adding exactly alike particles to any quantum system, produce different results \cite{arpita}. Sun et al. proposed a cavity-QED-based technique, where two atoms have been sent one after one at the desired levels and detected only if they end at the stages under consideration and thus they proved that $a$ and $a^\dagger$ are not commuting to each other \cite{sun}. Parigi et al. \cite{parigi} experimentally demonstrated how adding (subtracting) only one photon to (from) a completely classical and fully incoherent thermal light field can affect the system. They have applied creation and annihilation operators in alternate sequences and realized that the resulting states can differ a lot depending on the order in which the two quantum operators have been applied. The same group of people also carried out a single-photon interferometer-based experiment for proving the bosonic commutation relation directly \cite{kim}. The quantum algebra behind the commutation relation has a major role in many of the paradoxes, theorems, and applications of quantum physics. In addition, the work is motivated by the fact that, in the recent past, several real-life uses of nonclassical \cite{pathak} states have been announced. For example, squeezed states functioned as an important character in the phase diffusion problems \cite{subha1,subha2}. It can also be applied for detecting the gravitational waves in LIGO experiments \cite{abbasi,abbott}, for teleporting different types of quantum states \cite{furu}, and for continuous-variable quantum cryptography \cite{hillery}. The immediate requirement for a single photon resource can be achieved only by an antibunched source of light \cite{yuan}. Entangled states are detected to be of utmost use in both secure \cite{ekert} and uncertain \cite{bennett} quantum communication schemes. The most frequently used
nonclassicality witnesses are $l$-th order Mandel's function \cite{ref-mandel}, $l-1$th order  antibunching \cite{ref-s46}, $l-1$th order sub-Poissonian statistics \cite{ref-s47}, $l$th order squeezing  \cite{ref-s48}, Husmi $Q$ parameter \cite{ref-s49}, Agarwal-Tara criterion \cite{ref-s50} and Klyshko's parameter \cite{ref-s52}. Since many of these experimentally assessable nonclassicality witnesses can be written in terms of the moments of annihilation and creation operators \cite{priya}, it is beneficial to find out an analytic expression for the most general moment $\langle{a^{\dagger p}a^q}\rangle$, $p$, $q$ being non-negative integers.

Now the second problem arises that how to find the $\langle{a^{\dagger p}a^q}\rangle$ for a quantum state which includes the squeezing operator as squeezing operator $S(z)$ operated on Bosonic operators gives the linear combination of Bosonic operators as follows \cite{ref-gsa}
\begin{equation}
\label{sqop}
S^\dagger(z)a^{\dagger n}S(z) = (a^\dagger\cosh(r)-ae^{-2i\phi}\sinh(r))^n
\end{equation}
\begin{equation}
\label{sqop1}
S^\dagger(z)a^{n}S(z) = (a\cosh(r)-a^\dagger e^{2i\phi}\sinh(r))^n
\end{equation}
\begin{equation}
\label{sqop2}
D^\dagger(\alpha)a^nD(\alpha)=(a+\alpha)^n
\end{equation}
\begin{equation}
\label{sqop3}
D^\dagger(\alpha)a^{\dagger n}D(\alpha)=(a^\dagger+\alpha^*)^n
\end{equation}
Therefore to apply the displacement$(D(\alpha)$ and squeezing$(S(z))$ operators multiple times, we have to simplify the expressions in \eqref{sqop}-\eqref{sqop3}. Also, in many other quantum states this type of relation occurs. { The motivation behind the general expansion of natural power of the linear combination of Bosonic operators in normal order comes from the need to develop a powerful mathematical tool for analyzing and understanding the behavior of bosonic systems in quantum mechanics. Bosonic operators are essential tools for describing the creation and annihilation of bosonic particles, such as photons, phonons, and atomic nuclei. In many-body systems, the number of bosons in different states can be large, making it difficult to analyze the system using the individual operators. The general expansion of natural power of the linear combination of Bosonic operators in normal order provides a way to simplify the calculation of complex systems by reducing the number of terms in the expression. Furthermore, the normal ordering of the operators is an important technique in quantum mechanics that arranges the operators so that all the creation operators are to the left of the annihilation operators, reducing the complexity of the expression. The general expansion in normal order provides a systematic way of calculating the natural power of a linear combination of bosonic operators, and can be used in the calculation of correlation functions and the evaluation of the partition function. Therefore, the motivation behind the general expansion of natural power of the linear combination of Bosonic operators in normal order is to provide a powerful mathematical tool for analyzing and understanding the behavior of bosonic systems, and to simplify the calculation of complex many-body systems in quantum mechanics.} So in this paper, we find the simplified expression of $(ax+a^\dagger y)^n$ which can be used in different fields. Not only in Physics, this relation is also used in pure Mathematics non-commutative algebra where we study the expansion of non-commutative numbers or operators.

To the best of our knowledge, the two main way-outs for handling operator ordering related problems are Lie algebraic method \cite{zhang,perel} and Louisell's technique \cite{louisell} via the differentiation of coherent state representation \cite{klauder}. Another unified method for organizing the quantum operators into ordered products (normally ordered, antinormally ordered, Weyl ordered, or symmetrically ordered) is realized via the fascinating IWOP (integration within an ordered product of operators) technique  \cite{fan1,fan2}. In this small article, we intend to derive simple Mathematical formulae which can transfer different combinations of higher powers of $a$ and $a^\dagger$ into desired normally or antinormally ordered form. The basic theoretical tools and methods are shown in Sec.~\ref{bttm} We present our results in Sec.~\ref{sec2} which is followed by a discussion in Sec.~\ref{sec3}.

{ \section{The basic theoretical terminologies}
\label{bttm}

The basic theoretical terminologies used in the study of Bosonic operators include:
\begin{enumerate}
\item \textbf{Creation and annihilation operators:} These are the basic mathematical objects that represent the creation and annihilation of bosonic particles in quantum mechanics. They are used to build up the Fock space, which is the space of all possible states of a system of bosonic particles.
\item \textbf{Commutation relations:} The commutation relations between the creation and annihilation operators are fundamental in the study of bosonic systems. They determine the algebraic properties of the operators and the behavior of the system under different conditions.
\item \textbf{Normal ordering:} Normal ordering is a mathematical technique that arranges the operators so that all the creation operators are to the left of the annihilation operators. This simplifies the calculation of complex many-body systems by reducing the number of terms in the expression.
\item \textbf{Wick's theorem:} Wick's theorem is a powerful tool that allows for the calculation of expectation values of products of bosonic operators. It expresses the expectation value as a sum of all possible pairings of the operators in normal order.
\item \textbf{Second quantization:} Second quantization is a formalism that allows for the systematic treatment of many-body systems in terms of creation and annihilation operators. It provides a powerful tool for studying the behavior of bosonic systems under different conditions.
\item \textbf{Path integrals:} Path integrals are a powerful method for calculating the transition amplitudes between two quantum states. They provide a different perspective on quantum mechanics and can be used to study the behavior of bosonic systems in a variety of contexts.
\item \textbf{Principle of Mathematical Induction (PMI):} The Principle of Mathematical Induction(PMI) is a proof technique used to establish a Mathematical statement that holds for all natural numbers.\\
The principle has two parts:\\
\textbf{Basis Step:} We first prove that the statement holds for the first natural number, usually 1.\\
\textbf{Inductive Step:} We then prove that if the statement holds for some natural number $k$, then it also holds for the next natural number, which is $k+1$.\\
By proving these two parts, we can conclude that the statement holds for all natural numbers($\mathcal{N}$) greater than or equal to the first natural number.\\
PMI is a powerful tool for proving many important Mathematical theorems, such as those in number theory, combinatorics, and algebra.
\end{enumerate}

Overall, the study of bosonic operators and systems in quantum mechanics involves a combination of mathematical tools and methods, including algebraic techniques, normal ordering, Wick's theorem, second quantization, and path integrals. These tools and methods provide a powerful framework for understanding and analyzing the behavior of bosonic systems in a variety of physical contexts.}
\section{Results}
\label{sec2}


We have two relations: $a^ja^\dagger = a^\dagger a^j +ja^{j-1}$ and its conjugate $aa^{\dagger j}=ja^{\dagger j-1}+a^{\dagger j}a$ which can be easily proved as follows
\begin{lemma}
$[a,\,a^{\dagger j}] =  ja^{\dagger j-1}$
\label{eq:r2}
\end{lemma}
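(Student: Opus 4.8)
The plan is to prove the commutator identity $[a,\,a^{\dagger j}] = j a^{\dagger\, j-1}$ by induction on $j$, using only the fundamental commutation relation $[a,\,a^\dagger]=1$ together with the Leibniz-type product rule for commutators. First I would dispose of the base case $j=1$, where the statement reduces to $[a,\,a^\dagger]=aa^\dagger-a^\dagger a = 1 = 1\cdot a^{\dagger\,0}$, which is just the defining commutation relation. For the inductive step I would assume the identity holds for some $k\ge 1$, namely $[a,\,a^{\dagger k}]=k a^{\dagger\,k-1}$, and then compute $[a,\,a^{\dagger\,k+1}]$.

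The key algebraic tool is the derivation property of the commutator in its second argument, $[a,\,BC]=[a,\,B]\,C+B\,[a,\,C]$, which follows directly from expanding both sides. Writing $a^{\dagger\,k+1}=a^{\dagger k}\,a^\dagger$ and applying this product rule gives
\begin{equation}
[a,\,a^{\dagger\,k+1}] = [a,\,a^{\dagger k}]\,a^\dagger + a^{\dagger k}\,[a,\,a^\dagger].
\end{equation}
Substituting the inductive hypothesis into the first term and $[a,\,a^\dagger]=1$ into the second yields
\begin{equation}
[a,\,a^{\dagger\,k+1}] = k a^{\dagger\,k-1}\,a^\dagger + a^{\dagger k} = k a^{\dagger k} + a^{\dagger k} = (k+1)\,a^{\dagger k},
\end{equation}
which is precisely the claimed formula with $j=k+1$. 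By the Principle of Mathematical Induction the identity then holds for all natural numbers $j$.

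I do not anticipate a serious obstacle here, since the result is elementary; the only point requiring care is the justification of the product rule for commutators and the bookkeeping of exponents (ensuring $a^{\dagger\,k-1}a^\dagger = a^{\dagger k}$ is handled correctly). An equally clean alternative would be a direct, non-inductive argument: expand $a\,a^{\dagger j}$ by commuting the single $a$ past each of the $j$ factors of $a^\dagger$ one at a time, each pass generating one unit from $[a,\,a^\dagger]=1$ and producing $j$ copies of $a^{\dagger\,j-1}$, so that $a\,a^{\dagger j}=a^{\dagger j}a + j a^{\dagger\,j-1}$, which rearranges to the stated commutator. Since the paper emphasizes the inductive method, I would present the induction as the primary proof, with the direct count as an optional remark.
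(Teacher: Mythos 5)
Your proof is correct, but it takes a somewhat different route from the paper's. The paper proves this lemma directly rather than by induction: it writes $aa^{\dagger j} = aa^\dagger a^{\dagger(j-1)}$, substitutes the canonical relation $aa^\dagger = 1 + a^\dagger a$, and repeats, walking the single annihilation operator past one creation operator at a time so that each pass deposits one copy of $a^{\dagger(j-1)}$; after $j$ passes (indicated with an ellipsis and the phrase ``proceeding similarly $j$ times'') it arrives at $aa^{\dagger j} = ja^{\dagger(j-1)} + a^{\dagger j}a$ and rearranges to get the commutator. That is precisely the ``direct count'' you relegate to an optional closing remark. Your primary argument instead runs a formal induction on $j$, with the derivation property $[a,\,BC] = [a,\,B]C + B[a,\,C]$ doing in one line the bookkeeping that the paper handles with the ellipsis. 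The two proofs encode the same underlying mechanism, but yours is the tighter one: the paper's ``proceeding similarly $j$ times'' is an informal iteration that a strict reading would itself need induction to justify, whereas your Leibniz-rule step closes that gap cleanly, and it is arguably more in the spirit of the paper, which insists on PMI for its other results (Theorem 1, Lemma 3). What the paper's version buys in exchange is explicitness: the intermediate operator strings are written out, so the combinatorial origin of the factor $j$ is visible term by term.
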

\begin{proof}
\begin{eqnarray}
\begin{array}{rcl}
\label{eq:e1}
aa^{\dagger j} & = & aa^\dagger a^{\dagger{j-1}}\\
& = & (1+a^{\dagger}a) a^{\dagger{j-1}}\\
& = & a^{\dagger{j-1}}+a^\dagger a a^{\dagger{j-1}}\\
& = & a^{\dagger{j-1}}+a^\dagger aa^\dagger a^{\dagger{j-2}}\\
& = & a^{\dagger{j-1}}+a^\dagger (1+a^{\dagger}a)a^{\dagger{j-2}}\\
& = & 2a^{\dagger{j-1}}+a^{\dagger 2} a a^{\dagger{j-2}}\\
& =  &\ldots\\
& = & j a^{\dagger{j-1}}+a^{\dagger j} a\,\,\,\mbox{(proceeding similarly $j$ times)}\\
\Rightarrow\,\, aa^{\dagger j} - a^{\dagger j}a & = & j a^{\dagger{j-1}}\\
\Rightarrow\,\,[a, a^{\dagger j}] & = & ja^{\dagger j-1}
\end{array}
\end{eqnarray}
\end{proof}

\begin{lemma}
$[a^j,\,a^\dagger] = ja^{j-1}$
\label{eq:r3}
\end{lemma}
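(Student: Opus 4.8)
The plan is to mirror the computation already carried out for Lemma~\ref{eq:r2}, but now sliding a single $a^\dagger$ leftward through the string $a^j$ rather than pushing an $a$ rightward through $a^{\dagger j}$. The only algebraic input needed is the fundamental relation $aa^\dagger = 1 + a^\dagger a$. First I would peel off one factor of $a$ from the left of $a^j a^\dagger$ and write $a^j a^\dagger = a^{j-1}(aa^\dagger) = a^{j-1}(1 + a^\dagger a) = a^{j-1} + a^{j-1}a^\dagger a$. Applying the same move to the term $a^{j-1}a^\dagger$ produces a second copy of $a^{j-1}$ and leaves $a^{j-2}a^\dagger a^2$, and so on, each step contributing exactly one copy of $a^{j-1}$ while advancing the lone $a^\dagger$ one slot to the right.

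After carrying out this step $j$ times I expect to reach $a^j a^\dagger = j\,a^{j-1} + a^\dagger a^j$. Rearranging gives $a^j a^\dagger - a^\dagger a^j = j\,a^{j-1}$, which is precisely $[a^j,\,a^\dagger] = j\,a^{j-1}$, as claimed.

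The main obstacle is that this telescoping argument hides an informal ``proceed similarly $j$ times'' step, exactly as in the proof of Lemma~\ref{eq:r2}. I would make it rigorous using the Principle of Mathematical Induction on $j$. The base case $j=1$ is just the defining commutation relation $[a,\,a^\dagger]=1$. For the inductive step I would assume $[a^k,\,a^\dagger]=k\,a^{k-1}$ and compute, via the Leibniz-type identity $[AB,C]=A[B,C]+[A,C]B$ with $A=a$, $B=a^k$, $C=a^\dagger$, that $[a^{k+1},\,a^\dagger]=a[a^k,\,a^\dagger]+[a,\,a^\dagger]a^k = a\,(k\,a^{k-1}) + a^k = (k+1)\,a^k$, completing the induction.

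Alternatively, and more economically, the result follows for free by taking the Hermitian adjoint of Lemma~\ref{eq:r2}. Since the dagger reverses operator order and $(a^{\dagger j})^\dagger = a^j$, one has $[a,\,a^{\dagger j}]^\dagger = a^j a^\dagger - a^\dagger a^j = [a^j,\,a^\dagger]$, while $(j\,a^{\dagger\,j-1})^\dagger = j\,a^{j-1}$, giving $[a^j,\,a^\dagger]=j\,a^{j-1}$ at once. I would present the direct peeling method as the primary argument to keep the lemma self-contained and parallel to Lemma~\ref{eq:r2}, and note the adjoint shortcut as a quick consistency check.
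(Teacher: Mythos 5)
Your proposal is correct, and it essentially inverts the paper's emphasis: the "adjoint shortcut" you relegate to a consistency check is in fact the paper's entire proof. The paper simply takes the conjugate of the relation in Lemma~\ref{eq:r2}, $aa^{\dagger j} = ja^{\dagger j-1}+a^{\dagger j}a$, to get $a^ja^\dagger = ja^{j-1}+a^\dagger a^j$, and rearranges (its final displayed line even contains a typo, restating Lemma~\ref{eq:r2} instead of the claimed identity). Your primary argument --- peeling off factors of $a$ via $aa^\dagger = 1+a^\dagger a$ and then making the informal ``repeat $j$ times'' step rigorous by induction with the Leibniz identity $[AB,C]=A[B,C]+[A,C]B$ --- is a genuinely different, self-contained route: it buys independence from Lemma~\ref{eq:r2} and replaces the telescoping ellipsis (which the paper itself also leans on in proving Lemma~\ref{eq:r2}) with an actual induction, at the cost of a few extra lines. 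The adjoint argument buys brevity, reusing work already done; your computation $[a,\,a^{\dagger j}]^\dagger = (aa^{\dagger j})^\dagger-(a^{\dagger j}a)^\dagger = a^ja^\dagger - a^\dagger a^j=[a^j,\,a^\dagger]$ is exactly right, since the dagger reverses the order of products term by term. Either presentation is a complete and valid proof; yours is arguably the more careful of the two.
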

\begin{proof}
To prove this, we compute the conjugate of \eqref{eq:e1} and obtain
\begin{eqnarray}
\begin{array}{rcl}
\label{eq:e2}
a^j a^\dagger & = & ja^{j-1} +a^{\dagger}a^j\\
\Rightarrow\,\,a^j a^{\dagger} - a^{\dagger}a^j & = & j a^{j-1}\\
\Rightarrow \,\,[a, a^{\dagger j}] & = & ja^{\dagger j-1}
\end{array}
\end{eqnarray}
\end{proof}

Using lemma \ref{eq:r2} and \ref{eq:r3} we prove the theorem \ref{t1} as follows
\begin{theorem}
\label{t1}
\begin{align}
\label{eqi}
(ax+a^\dagger y)^n=\sum_{m=0}^{\left[\frac{n}{2}\right]}\frac{n!(xy)^m}{m!(n-2m)!2^m}\sum_{r=0}^{n-2m}\binom{n-2m}{ r}x^ry^{n-2m-r}a^{\dagger n-2m-r}a^r
\end{align}
\end{theorem}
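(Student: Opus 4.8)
The plan is to first recognize that the inner sum in \eqref{eqi} is nothing but the fully normal-ordered form of a power, and then to run an induction on $n$ driven by a single operator recursion. Write $N_k := \sum_{r=0}^{k}\binom{k}{r}x^r y^{k-r}a^{\dagger k-r}a^r$ for the normal-ordered expansion of $(ax+a^\dagger y)^k$, and set $c(n,m):=\frac{n!}{m!(n-2m)!\,2^m}$, so that the claim reads $(ax+a^\dagger y)^n=\sum_{m=0}^{[n/2]}c(n,m)(xy)^m N_{n-2m}$. The base cases $n=0$ and $n=1$ are immediate, since only the $m=0$ term survives and $N_0=1$, $N_1=a^\dagger y+ax$, which matches the left-hand side.

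The key step, which I would isolate as the engine of the whole argument, is the operator recursion $(ax+a^\dagger y)\,N_k=N_{k+1}+k\,xy\,N_{k-1}$. To establish it I would left-multiply $N_k$ by $a^\dagger y$ and by $ax$ separately. The factor $a^\dagger y$ merely raises each creation exponent, while for $ax$ I would push the single annihilation operator through $a^{\dagger k-r}$ using Lemma \ref{eq:r2} in the form $a\,a^{\dagger k-r}=a^{\dagger k-r}a+(k-r)\,a^{\dagger k-r-1}$. The ``pass-through'' pieces of the two products then combine through Pascal's rule $\binom{k}{r}+\binom{k}{r-1}=\binom{k+1}{r}$ to rebuild exactly $N_{k+1}$, whereas the ``contraction'' piece collapses via the identity $\binom{k}{r}(k-r)=k\binom{k-1}{r}$ into precisely $k\,xy\,N_{k-1}$.

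With this recursion in hand the inductive step is mechanical. Assuming the formula for $n$, I would write $(ax+a^\dagger y)^{n+1}=(ax+a^\dagger y)\sum_{m}c(n,m)(xy)^m N_{n-2m}$ and apply the recursion term by term with $k=n-2m$, splitting each contribution into an $N_{n+1-2m}$ part and a $k\,xy\,N_{n-1-2m}$ part. Shifting the summation index $m\mapsto m+1$ in the second family, the coefficient of $(xy)^m N_{n+1-2m}$ becomes $c(n,m)+(n-2m+2)\,c(n,m-1)$, so the entire proof reduces to the scalar identity $c(n,m)+(n-2m+2)\,c(n,m-1)=c(n+1,m)$; this follows at once by placing both terms over the common denominator $m!(n+1-2m)!\,2^m$ and using $(n+1-2m)+2m=n+1$.

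I expect the main obstacle to be bookkeeping rather than any conceptual difficulty: one must track the floor limit $[n/2]$ carefully as it shifts between the sums indexed by $n$, by $n+1$, and by the reindexed family, and verify that the boundary contributions (the vanishing $r=k$ term feeding $N_{k-1}$, and the top term in the even-$n$ versus odd-$n$ cases) either vanish or land in the correct slot. Once the recursion $(ax+a^\dagger y)\,N_k=N_{k+1}+k\,xy\,N_{k-1}$ and the coefficient identity are both checked, the Principle of Mathematical Induction closes the argument.
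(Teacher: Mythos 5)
Your proposal is correct, and every step of it checks out: the recursion $(ax+a^\dagger y)N_k=N_{k+1}+k\,xy\,N_{k-1}$ does follow from Lemma \ref{eq:r2} together with Pascal's rule and $\binom{k}{r}(k-r)=k\binom{k-1}{r}$; the scalar identity $c(n,m)+(n-2m+2)c(n,m-1)=c(n+1,m)$ holds (over the common denominator $m!(n+1-2m)!\,2^m$ the two numerators are $n!(n+1-2m)$ and $2m\,n!$, summing to $(n+1)!$); and the boundary terms land where you say they do, since for odd $n=2p+1$ the surviving top term $(n-2p)c(n,p)(xy)^{p+1}N_0$ equals $c(n+1,p+1)(xy)^{p+1}$, while for even $n$ it carries the factor $n-2p=0$ and vanishes. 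At heart this is the same induction the paper runs: there too one multiplies the hypothesis by $(ax+a^\dagger y)$, pushes $a$ through $a^{\dagger j}$ via Lemma \ref{eq:r2}, and invokes Pascal's rule, the identity ${}^nC_r(n-r)={}^{n-1}C_r\,n$, a shift of $m$, and an even/odd split of $k$. What differs is the decomposition. The paper executes all of this inline on the full double sum over $m$ and $r$, so the operator algebra, the binomial combinatorics, and the floor-limit bookkeeping are interleaved in one long chain; you instead quarantine the noncommutativity inside a single-sum lemma about $N_k$, after which the inductive step is a purely scalar computation. Your organization is easier to verify and conceptually sharper: the recursion for $N_k$ is the operator form of the Hermite recursion $H_{e_{k+1}}(z)=z\,H_{e_k}(z)-k\,H_{e_{k-1}}(z)$, which explains why the coefficients $c(n,m)$ are those of the inverse Hermite expansion and connects directly to the paper's Theorem 2, where $(ax+a^\dagger y)^n\ket{0}$ is expressed through $H_{e_n}$. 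What the paper's inline route buys is self-containment (no auxiliary object $N_k$ is introduced) and continuity with the hit-and-trial computations for $n\le 7$ that motivate the formula in the first place.
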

\begin{proof}
To find the above result we use the { direct or }hit and trial method as follows
\begin{align*}
(ax+a^\dagger y)^2 & = a^2x^2+a^{\dagger 2}y^2+xy(aa^\dagger+a^\dagger a) \\ & =
a^2x^2+a^{\dagger 2}y^2+xy(1+2a^\dagger a) \\
(ax+a^\dagger y)^3 & = a^3x^3+a^{\dagger 3}y^3+xy^2(aa^{\dagger2}+a^\dagger+2a^{\dagger2}a)+x^2y(a^\dagger a^2+a+2aa^\dagger a) \\ & =
a^3x^3+a^{\dagger 3}y^3 + xy^2(a^{\dagger 2}a+2a^\dagger+a^\dagger+2a^{\dagger2}a) \\ & + x^2y(a^\dagger a^2+a+2(1+a^\dagger a)a) \\ & =
a^3x^3+a^{\dagger 3}y^3 + xy^2(3a^{\dagger 2}a+3a^\dagger) + x^2y(3a^\dagger a^2+3a) \\ 
(ax+a^\dagger y)^4 & = a^4x^4+a^{\dagger 4}y^4+x^3y(a^\dagger a^3+3aa^\dagger a^2+3a^2)\\ & +x^2y^2(3aa^{\dagger2}a+3aa^\dagger+3a^{\dagger2}a^2+3a^\dagger a)+xy^3(aa^{\dagger3}+3a^{\dagger3}a+3a^{\dagger2}) \\ & =
a^4x^4+a^{\dagger 4}y^4+x^3y(a^\dagger a^3+3(1+a^{\dagger}a)a^2+3a^2) \\ &  +x^2y^2(3(2a^\dagger+a^{\dagger2}a)a+3(1+a^\dagger a)+3a^{\dagger2}a^2+3a^\dagger a) \\ & +xy^3(3a^{\dagger2}+ a^{\dagger3}a+3a^{\dagger3}a+3a^{\dagger2})\\ & =
a^4x^4+a^{\dagger 4}y^4+x^3y(4a^\dagger a^3+6a^2) +x^2y^2(6a^{\dagger2}a^2+12a^{\dagger}a+3) \\ & +xy^3(4a^{\dagger3}a+6a^{\dagger2}) \\ 
(ax+a^\dagger y)^5 & = a^5x^5+a^{\dagger 5}y^5+x^4y(a^\dagger a^4+4aa^\dagger a^3+6a^3)\\ & +x^3y^2(4a^{\dagger 2}a^3+6a^\dagger a^2+6aa^{\dagger2}a^2+12aa^\dagger a+3a) \\ &+x^2y^3(6a^{\dagger3}a^2+12a^{\dagger2}a+3a^\dagger+4aa^{\dagger3}a+6aa^{\dagger2})+xy^4(aa^{\dagger4}+4a^{\dagger4}a+6a^{\dagger3}) \\ & =
a^5x^5+a^{\dagger 5}y^5+x^4y(a^\dagger a^4+4(1+a^\dagger a)a^3+6a^3)\\ & +x^3y^2(4a^{\dagger 2}a^3+6a^\dagger a^2+6(2a^\dagger+a^{\dagger2}a)a^2+12(1+a^\dagger a)a+3a) \\ &+x^2y^3(6a^{\dagger3}a^2+12a^{\dagger2}a+3a^\dagger+4(3a^{\dagger2}+a^{\dagger3}a)a+6(2a^{\dagger}+a^{\dagger2}a)) \\ & +xy^4(4a^{\dagger3}+a^{\dagger4}a+4a^{\dagger4}a+6a^{\dagger3}) \\ & =
a^5x^5+a^{\dagger 5}y^5+x^4y(5a^\dagger a^4+10a^3) +x^3y^2(10a^{\dagger 2}a^3+30a^\dagger a^2+15a) \\ &+x^2y^3(10a^{\dagger3}a^2+30a^{\dagger2}a+15a^\dagger) +xy^4(10a^{\dagger3}+5a^{\dagger4}a) \\ 
(ax+a^\dagger y)^6 & = a^6x^6+a^{\dagger 6}y^6+x^5y(a^\dagger a^5+5aa^\dagger a^4+10a^4)\\ & +x^4y^2(5a^{\dagger2}a^4+10a^\dagger a^3+10aa^{\dagger2}a^3+30aa^\dagger a^2+15a^2) \\ & +x^3y^3(10a^{\dagger3}a^3+30a^{\dagger2}a^2+15a^\dagger a+10aa^{\dagger3}a^2+30aa^{\dagger2}a+15aa^\dagger)\\ &+x^2y^4(10aa^{\dagger3}+5aa^{\dagger4}a+10a^{\dagger4}a^2+30a^{\dagger3}a+15a^{\dagger2})\\ & +xy^5(aa^{\dagger5}+10a^{\dagger4}+5a^{\dagger5}a) \\ & =
a^6x^6+a^{\dagger 6}y^6+x^5y(a^\dagger a^5+5(1+a^\dagger a)a^4+10a^4)\\ & +x^4y^2(5a^{\dagger2}a^4+10a^\dagger a^3+10(2a^\dagger+a^{\dagger2}a)a^3+30(1+a^\dagger a)a^2+15a^2) \\ & +x^3y^3(10a^{\dagger3}a^3+30a^{\dagger2}a^2+15a^\dagger a\\ & +10(3a^{\dagger2}+a^{\dagger3}a)a^2+30(2a^\dagger+a^{\dagger2}a)a+15(1+a^\dagger a))\\ &+x^2y^4(10(3a^{\dagger2}+ a^{\dagger3}a)+5(4a^{\dagger3}+a^{\dagger4}a)a+10a^{\dagger4}a^2+30a^{\dagger3}a+15a^{\dagger2})\\ & +xy^5(5a^{\dagger4}+a^{\dagger5}a+10a^{\dagger4}+5a^{\dagger5}a) \\ & =
a^6x^6+a^{\dagger 6}y^6+x^5y(6a^\dagger a^5+15a^4)+x^4y^2(15a^{\dagger2}a^4+60a^\dagger a^3+45a^2) \\ & +x^3y^3(20a^{\dagger3}a^3+90a^{\dagger2}a^2+90a^\dagger a+15)\\ &+x^2y^4(15a^{\dagger4}a^2+60a^{\dagger3}a+45a^{\dagger2}) +xy^5(15a^{\dagger4}+6a^{\dagger5}a)\\ 
(ax+a^\dagger y)^7 & = a^7x^7+a^{\dagger 7}y^7+x^6y(a^\dagger a^6+6aa^{\dagger}a^5+15a^5)\\ & +
x^5y^2(6a^{\dagger2}a^5+15a^\dagger a^4+15aa^{\dagger2}a^4+60aa^\dagger a^3+45a^3)\\ & +
x^4y^3(15a^{\dagger3}a^4+60a^{\dagger2}a^3+45a^\dagger a^2\\ & +
20aa^{\dagger3}a^3+90aa^{\dagger2}a^2+90aa^\dagger a+15a)\\ & +
x^3y^4(20a^{\dagger4}a^3+90a^{\dagger3}a^2+90a^{\dagger2}a+15a^\dagger\\ & +
15aa^{\dagger4}a^2+60aa^{\dagger3}a+45aa^{\dagger2})\\ & +
x^2y^5(15a^{\dagger5}a^2+60a^{\dagger4}a+45a^{\dagger3}+15aa^{\dagger4}+6aa^{\dagger5}a)\\ & +
xy^6(aa^{\dagger6}+15a^{\dagger5}+ 6a^{\dagger6}a) \\& =
a^7x^7+a^{\dagger 7}y^7+x^6y(a^\dagger a^6+6(1+a^{\dagger}a)a^5+15a^5)\\ & +
x^5y^2(6a^{\dagger2}a^5+15a^\dagger a^4+15(2a^\dagger+a^{\dagger2}a)a^4+60(1+a^\dagger a) a^3+45a^3)\\ & +
x^4y^3(15a^{\dagger3}a^4+60a^{\dagger2}a^3+45a^\dagger a^2\\ & +
20(3a^{\dagger2}+a^{\dagger3}a)a^3 +90(2a^\dagger+a^{\dagger2}a)a^2+90(1+a^\dagger a)a+15a)\\ & +
x^3y^4(20a^{\dagger4}a^3+90a^{\dagger3}a^2+90a^{\dagger2}a+15a^\dagger\\ & +
15(4a^{\dagger3}+a^{\dagger4}a)a^2+60(3a^{\dagger2}+a^{\dagger3}a)a+45(2a^{\dagger}+a^{\dagger2}a)\\ & +
x^2y^5(15a^{\dagger5}a^2+60a^{\dagger4}a+45a^{\dagger3}+15(4a^{\dagger3}+a^{\dagger4}a)+6(5a^{\dagger4}+a^{\dagger5}a)a)\\ & +
xy^6((6a^{\dagger5}+a^{\dagger6}a)+15a^{\dagger5}+ 6a^{\dagger6}a)  \\ & =
a^7x^7+a^{\dagger7}y^7+x^6y(7a^\dagger a^6+21a^5)+x^5y^2(21a^{\dagger2}a^5+105a^{\dagger}a^4+105a^3)\\ &+
x^4y^3(35a^{\dagger3}a^4+210a^{\dagger2}a^3+315a^\dagger a^2+105a) \\ & +
x^3y^4(35a^{\dagger4}a^3+210a^{\dagger3}a^2+315a^{\dagger2}a+105a^\dagger) \\ & +
x^2y^5(21a^{\dagger5}a^2+105a^{\dagger4}a+105a^{\dagger3}) + xy^6(7a^{\dagger6}a+21a^{\dagger5})
\end{align*}
Thus general term is given by
\begin{align}
\label{eq}
(ax+a^\dagger y)^n=\sum_{m=0}^{\left[\frac{n}{2}\right]}\frac{n!(xy)^m}{m!(n-2m)!2^m}\sum_{r=0}^{n-2m}\binom{n-2m}{ r}x^ry^{n-2m-r}a^{\dagger n-2m-r}a^r.
\end{align}
Now we prove equation \eqref{eq} using the principle of Mathematical induction and hence let us assume that the given statement is denoted by $P(n)$.
\begin{itemize}
\item[Case - 1] Clearly $P(n)$ is true for $n=1$.
\item[Case - 2] Let $P(n)$ be true for $n=k$ so
\begin{align*}
(ax+a^\dagger y)^k=\sum_{m=0}^{\left[\frac{k}{2}\right]}\frac{k!(xy)^m}{m!(k-2m)!2^m}\sum_{r=0}^{k-2m}\binom{k-2m}{ r}x^ry^{k-2m-r}a^{\dagger k-2m-r}a^r.
\end{align*}
\item[Case - 3] Now we prove $P(n)$ for $n=k+1$ as follows
\end{itemize}
\begin{align*}
(ax+a^\dagger y)^{k+1} & = (ax+a^\dagger y)(ax+a^\dagger y)^k \\ & =
(ax+a^\dagger y)\sum_{m=0}^{\left[\frac{k}{2}\right]}\frac{k!(xy)^m}{m!(k-2m)!2^m}\sum_{r=0}^{k-2m}\binom{k-2m}{ r}x^ry^{k-2m-r}a^{\dagger k-2m-r}a^r \\ & =
\sum_{m=0}^{\left[\frac{k}{2}\right]}\frac{k!(xy)^mx}{m!(k-2m)!2^m}\sum_{r=0}^{k-2m}\binom{k-2m}{ r}x^ry^{k-2m-r}aa^{\dagger k-2m-r}a^r \\ & +
\sum_{m=0}^{\left[\frac{k}{2}\right]}\frac{k!(xy)^my}{m!(k-2m)!2^m}\sum_{r=0}^{k-2m}\binom{k-2m}{ r}x^ry^{k-2m-r}a^{\dagger k-2m-r+1}a^r \\ & =
\sum_{m=0}^{\left[\frac{k}{2}\right]}\frac{k!(xy)^m}{m!(k-2m)!2^m}\sum_{r=0}^{k-2m}\binom{k-2m}{r}x^{r+1}y^{k-2m-r}\\ & \times((k-2m-r)a^{\dagger k-2m-r-1}+a^{\dagger k-2m-r}a)a^r \\ & +
\sum_{m=0}^{\left[\frac{k}{2}\right]}\frac{k!(xy)^m}{m!(k-2m)!2^m}\sum_{r=0}^{k-2m}\binom{k-2m}{r}x^ry^{k-2m-r+1}a^{\dagger k-2m-r+1}a^r\\ & =
\sum_{m=0}^{\left[\frac{k}{2}\right]}\frac{k!(xy)^m}{m!(k-2m)!2^m}\sum_{r=0}^{k-2m-1}\binom{k-2m}{r}x^{r+1}y^{k-2m-r}(k-2m-r)a^{\dagger k-2m-r-1}a^r\\ & +
\sum_{m=0}^{\left[\frac{k}{2}\right]}\frac{k!(xy)^m}{m!(k-2m)!2^m}\sum_{r=0}^{k-2m}\binom{k-2m}{r}x^{r+1}y^{k-2m-r}a^{\dagger k-2m-r}a^{r+1} \\ & +
\sum_{m=0}^{\left[\frac{k}{2}\right]}\frac{k!(xy)^m}{m!(k-2m)!2^m}\sum_{r=0}^{k-2m}\binom{k-2m}{r}x^ry^{k-2m-r+1}a^{\dagger k-2m-r+1}a^r  \\ &\mbox{(as at $r=k-2m$, the first summation becomes zero)} \\ & =
\sum_{m=1}^{\left[\frac{k}{2}\right]+1}\frac{k!(xy)^{m-1}}{(m-1)!(k-2m+2)!2^{m-1}}\sum_{r=0}^{k-2m+1}\binom{k-2m+2}{r}x^{r+1}y^{k-2m+2-r} \\ & \times(k-2m+2-r)a^{\dagger k-2m+2-r-1}a^r \\ & +
\sum_{m=0}^{\left[\frac{k}{2}\right]}\frac{k!(xy)^m}{m!(k-2m)!2^m}\sum_{r=1}^{k-2m+1}\binom{k-2m}{r-1}x^{r}y^{k-2m-r+1}a^{\dagger k-2m-r+1}a^{r} \\ & +
\sum_{m=0}^{\left[\frac{k}{2}\right]}\frac{k!(xy)^m}{m!(k-2m)!2^m}\sum_{r=0}^{k-2m}\binom{k-2m}{r}x^ry^{k-2m-r+1}a^{\dagger k-2m-r+1}a^r \\ & \text{replacing $m$ by $m+1$ in first and $r$ by $r+1$ in second line of equation} \\ & =
\sum_{m=1}^{\left[\frac{k}{2}\right]+1}\frac{k!(xy)^{m-1}}{(m-1)!(k-2m+2)!2^{m-1}}\sum_{r=0}^{k-2m+1}\binom{k-2m+2}{r}x^{r+1}y^{k-2m+2-r} \\ & \times(k-2m+2-r)a^{\dagger k-2m+1-r}a^r \\ & +
\sum_{m=0}^{\left[\frac{k}{2}\right]}\frac{k!(xy)^m}{m!(k-2m)!2^m}\sum_{r=1}^{k-2m}\left\{{k-2m\choose r-1}+{k-2m\choose r}\right\}x^{r}y^{k-2m-r+1}a^{\dagger k-2m-r+1}a^{r} \\ & +
\sum_{m=0}^{\left[\frac{k}{2}\right]}\frac{k!(xy)^m}{m!(k-2m)!2^m}\left\{y^{k+1-2m}a^{\dagger k+1-2m} + x^{k+1-2m}a^{k+1-2m}\right\} \\ & =
\sum_{m=1}^{\left[\frac{k}{2}\right]+1}\frac{k!(xy)^{m-1}}{(m-1)!(k-2m+2)!2^{m-1}}\sum_{r=0}^{k-2m+1}{k-2m+2\choose r}x^{r+1}y^{k-2m+2-r} \\ & \times(k-2m+2-r)a^{\dagger k-2m+1-r}a^r \\ & +
\sum_{m=0}^{\left[\frac{k}{2}\right]}\frac{k!(xy)^m}{m!(k-2m)!2^m}\sum_{r=1}^{k-2m}{k-2m+1\choose r}x^{r}y^{k-2m-r+1}a^{\dagger k-2m-r+1}a^{r} \\ & +
\sum_{m=0}^{\left[\frac{k}{2}\right]}\frac{k!(xy)^m}{m!(k-2m)!2^m}\left\{y^{k+1-2m}a^{\dagger k+1-2m} + x^{k+1-2m}a^{k+1-2m}\right\} \\ & =
\sum_{m=1}^{\left[\frac{k}{2}\right]+1}\frac{k!(xy)^{m-1}}{(m-1)!(k-2m+2)!2^{m-1}}\sum_{r=0}^{k-2m+1}{k-2m+2\choose r}x^{r+1}y^{k-2m+2-r} \\ & \times(k-2m+2-r)a^{\dagger k-2m+1-r}a^r \\ & +
\sum_{m=0}^{\left[\frac{k}{2}\right]}\frac{k!(xy)^m}{m!(k-2m)!2^m}\sum_{r=0}^{k+1-2m}{k-2m+1\choose r}x^{r}y^{k-2m-r+1}a^{\dagger k-2m-r+1}a^{r} \\ & =
\sum_{m=1}^{\left[\frac{k}{2}\right]+1}\frac{k!(xy)^{m}}{(m-1)!(k-2m+2)!2^{m-1}}\sum_{r=0}^{k-2m+1}{k-2m+2\choose r}x^{r}y^{k-2m+1-r} \\ & \times(k-2m+2-r)a^{\dagger k-2m+1-r}a^r \\ & +
\sum_{m=0}^{\left[\frac{k}{2}\right]}\frac{k!(xy)^m}{m!(k-2m)!2^m}\sum_{r=0}^{k+1-2m}{k-2m+1\choose r}x^{r}y^{k-2m-r+1}a^{\dagger k-2m-r+1}a^{r} \\ & =
\sum_{m=1}^{\left[\frac{k}{2}\right]+1}\frac{k!(xy)^{m}(k-2m+2)}{(m-1)!(k-2m+2)!2^{m-1}}\sum_{r=0}^{k-2m+1}{k-2m+1\choose r}x^{r}y^{k-2m+1-r} \\ & \times a^{\dagger k-2m+1-r}a^r~~~~\mbox{(using $^nC_r(n-r)= ~^{n-1}C_rn$)} \\ & +
\sum_{m=0}^{\left[\frac{k}{2}\right]}\frac{k!(xy)^m}{m!(k-2m)!2^m}\sum_{r=0}^{k+1-2m}{k-2m+1\choose r}x^{r}y^{k-2m-r+1}a^{\dagger k-2m-r+1}a^{r} \\ & =
\sum_{m=1}^{\left[\frac{k}{2}\right]+1}\frac{k!(xy)^{m}}{(m-1)!(k-2m+1)!2^{m-1}}\sum_{r=0}^{k-2m+1}{k-2m+1\choose r}x^{r}y^{k-2m+1-r} \\ & \times a^{\dagger k-2m+1-r}a^r \\ & +
\sum_{m=0}^{\left[\frac{k}{2}\right]}\frac{k!(xy)^m}{m!(k-2m)!2^m}\sum_{r=0}^{k+1-2m}{k-2m+1\choose r}x^{r}y^{k-2m-r+1}a^{\dagger k-2m-r+1}a^{r} \\ & =
\sum_{m=1}^{\left[\frac{k}{2}\right]}\frac{k!(xy)^{m}}{(m)!(k-2m+1)!2^{m}}(2m+k-2m+1)\sum_{r=0}^{k-2m+1}{k-2m+1\choose r} \\ & \times x^{r}y^{k-2m+1-r}a^{\dagger k-2m+1-r}a^r + \sum_{r=0}^{k+1}{k+1\choose r}x^{r}y^{k-r+1}a^{\dagger k-r+1}a^{r}  \\ & +\left\{
\begin{array}{lcl}
\frac{(2p)!(xy)^{p+1}}{p!(-1)!2^p}\sum_{r=0}^{-1}{-1\choose r}x^ry^{-1-r}a^{\dagger -1-r}a^r &\mbox{for even $k=2p$} \\
\frac{(2p+1)!(xy)^{p+1}}{p!2^p}\sum_{r=0}^0{0\choose r}x^ry^{-r}a^{\dagger -r}a^r & \mbox{for odd $k=2p+1$}
\end{array} \right. \\ & =
\sum_{m=1}^{\left[\frac{k}{2}\right]}\frac{(k+1)!(xy)^{m}}{(m)!(k-2m+1)!2^{m}}\sum_{r=0}^{k-2m+1}{k-2m+1\choose r} \\ & \times x^{r}y^{k-2m+1-r}a^{\dagger k-2m+1-r}a^r + \sum_{r=0}^{k+1}{k+1\choose r}x^{r}y^{k-r+1}a^{\dagger k-r+1}a^{r}  \\ & +\left\{
\begin{array}{lcl}
0 &\mbox{for even $k=2p$} \\
\frac{(2p+2)!(xy)^{p+1}}{(p+1)!2^{p+1}} & \mbox{for odd $k=2p+1$}
\end{array} \right.
\\ & =
\sum_{m=0}^{\left[\frac{k+1}{2}\right]}\frac{(k+1)!(xy)^{m}}{(m)!(k+1-2m)!2^{m}}\sum_{r=0}^{k+1-2m}{k+1-2m\choose r}x^{r}y^{k+1-2m-r}a^{\dagger k+1-2m-r}a^r
\end{align*}
Thus $P(n)$ is true $n=k+1$ if it is true for $n=k$ and hence by PMI the equation \eqref{eq} is true for all $n\in \mathcal{N}$. 

\end{proof}
\begin{lemma}
\label{l3}
\begin{equation}
\label{eql3}
aa^{\dagger n}\ket0=na^{\dagger n-1}\ket0
\end{equation}
\end{lemma}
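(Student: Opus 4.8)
The plan is to recognize that Lemma \ref{l3} is nothing more than the action of the commutator identity from Lemma \ref{eq:r2} on the vacuum state, so almost all the work has already been done. First I would recall from Lemma \ref{eq:r2} that $[a,\,a^{\dagger n}] = na^{\dagger n-1}$, and rewrite this as the operator identity $aa^{\dagger n} = na^{\dagger n-1} + a^{\dagger n}a$ by merely expanding the commutator. This holds as an identity between operators, independent of the state on which it acts.

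Next I would apply both sides of this identity to the vacuum ket $\ket0$, obtaining $aa^{\dagger n}\ket0 = na^{\dagger n-1}\ket0 + a^{\dagger n}a\ket0$. The key input, which is the defining property of the vacuum, is that the annihilation operator kills the vacuum, $a\ket0 = 0$. Substituting this annihilates the second term and leaves exactly $aa^{\dagger n}\ket0 = na^{\dagger n-1}\ket0$, which is the claim.

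If one prefers a self-contained argument that does not cite Lemma \ref{eq:r2}, I would instead proceed by induction on $n$. The base case $n=1$ follows from $aa^\dagger\ket0 = (1 + a^\dagger a)\ket0 = \ket0$, again using $a\ket0 = 0$. For the inductive step one peels off a single creation operator and uses $aa^\dagger = 1 + a^\dagger a$ to write $aa^{\dagger (n+1)}\ket0 = (1 + a^\dagger a)a^{\dagger n}\ket0 = a^{\dagger n}\ket0 + a^\dagger(aa^{\dagger n}\ket0)$; applying the inductive hypothesis to the bracketed factor and collecting terms gives $(n+1)a^{\dagger n}\ket0$, completing the induction.

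There is essentially no hard part here. The only subtlety is remembering to invoke $a\ket0 = 0$ in order to discard the residual normally ordered term $a^{\dagger n}a\ket0$, so the entire content of the lemma is that the annihilation operator sitting to the left sees only the pile of creation operators and contributes a factor equal to their number. I therefore expect the proof to be a one-line corollary of Lemma \ref{eq:r2}.
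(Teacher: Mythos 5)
Your proof is correct, but your primary route differs from the paper's. The paper proves the lemma by a fresh induction on $n$ at the level of state vectors: base case $aa^\dagger\ket0=\ket0$, then in the inductive step it writes $aa^{\dagger\, k+1}\ket0 = (1+a^\dagger a)a^{\dagger k}\ket0$ and applies the hypothesis to get $(k+1)a^{\dagger k}\ket0$ --- which is essentially word for word the self-contained backup argument you sketch in your third paragraph. Your main argument is instead a one-line corollary: the operator identity $aa^{\dagger n} = na^{\dagger n-1} + a^{\dagger n}a$ is already available from Lemma~1 of the paper, so acting on $\ket0$ and using $a\ket0=0$ kills the normally ordered remainder and gives the claim immediately. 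This is more economical and arguably the better proof, since the paper's induction silently duplicates work already done in establishing the commutator $[a,a^{\dagger n}]=na^{\dagger n-1}$; the only thing the lemma genuinely adds beyond Lemma~1 is the invocation of $a\ket0=0$, which your argument isolates cleanly. The paper's version buys self-containedness (it never cites Lemma~1), at the cost of redundancy. Both are valid; there is no gap in either.
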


\begin{proof} we prove this lemma using PMI so let the statement is denoted by $P(n)$.
\begin{itemize}
\item[Case - 1] First we prove $P(n)$ for $n=1$ as follows,\\ LHS = $aa^\dagger\ket0=1\ket0$ = RHS
\item[Case - 2] Let the result is true for $n=k$, so $aa^{\dagger k}\ket0 = ka^{\dagger k-1}\ket0$
\item[Case - 3] Now we prove $P(n)$ is true for $n=k+1$, as follows
\begin{align*}
aa^{\dagger k+1}\ket0 & = aa^\dagger a^{\dagger k}\ket0 = (1+a^\dagger a)a^{\dagger k}\ket0 \\ & =
a^{\dagger k}\ket0 + a^\dagger aa^{\dagger k}\ket0 = a^{\dagger k}\ket0 + a^\dagger ka^{\dagger k-1}\ket0 \\ & =
a^{\dagger k}\ket0 +ka^{\dagger k}\ket0 = (k+1)a^{\dagger k}\ket0
\end{align*}
hence $P(n)$ is true for $n=k+1$ if it is true for $n=k$. Thus by PMI $P(n)$(equation \eqref{eql3}) is true $\forall n\in \mathcal{N}$. Using the equation \eqref{eql3}, we can find the following relation
\begin{align*}
aa^{\dagger p}\ket{n} & =aa^{\dagger p}a^{\dagger n}\frac{1}{\sqrt{n!}}\ket{0} =\frac{1}{\sqrt{n!}}aa^{\dagger p+n}\ket{0} \\ & =
\frac{1}{\sqrt{n!}}(n+p)a^{\dagger p+n-1}\ket{0} ~~~~~~~\mbox{using the equation \eqref{eql3}} \\ & =
(n+p)a^{\dagger p-1}\ket{n}
\end{align*}
\end{itemize}
\end{proof}
we use the above lemma in following theorem
\begin{theorem}
\begin{align}
\label{eqt2}
(ax+a^\dagger y)^n\ket0 & =n!\sum_{m=0}^{\left[\frac{n}{2}\right]}\frac{a^{\dagger n-2m}}{m!(n-2m)!2^m}x^{m}y^{n-m}\ket0 \\ & =
n!y^n\sum_{m=0}^{\left[\frac{n}{2}\right]}\frac{a^{\dagger n-2m}}{m!(n-2m)!2^m}\left(\frac{x}{y}\right)^m\ket0 \nonumber \\ & =
n!y^n\sum_{m=0}^{\left[\frac{n}{2}\right]}\frac{(-1)^ma^{\dagger n-2m}}{m!(n-2m)!2^m}\left(\sqrt{-\frac{y}{x}}\right)^{-2m}\ket0 \nonumber \\ & =
n!y^n\left(\sqrt{-\frac{y}{x}}\right)^{-n}\sum_{m=0}^{\left[\frac{n}{2}\right]}\frac{(-1)^ma^{\dagger n-2m}}{m!(n-2m)!2^m}\left(\sqrt{-\frac{y}{x}}\right)^{n-2m}\ket0 \nonumber \\ & =
n!y^n\left(\sqrt{-\frac{x}{y}}\right)^{n}\sum_{m=0}^{\left[\frac{n}{2}\right]}\frac{(-1)^m}{m!(n-2m)!2^m}\left(\sqrt{-\frac{y}{x}}a^{\dagger}\right)^{n-2m}\ket0 \nonumber \\ & =
\left(\sqrt{-xy}\right)^{n}H_{e_n}\left(\sqrt{-\frac{y}{x}}a^{\dagger}\right)\ket0
\end{align}
\end{theorem}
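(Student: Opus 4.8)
The plan is to obtain the first equality of \eqref{eqt2} as an immediate corollary of Theorem~\ref{t1}, and then to recognise the resulting single sum as a probabilists' Hermite polynomial $H_{e_n}$ by elementary rearrangement of the powers of $x$ and $y$. The genuine content lives entirely in the first step; everything after it is formal bookkeeping.

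First I would act with the normally ordered expansion \eqref{eqi} on the vacuum $\ket0$. Every summand of the inner $r$-sum terminates in the factor $a^r$, and since $a\ket0=0$ we have $a^r\ket0=0$ for all $r\ge 1$, so only the $r=0$ term survives. For that term $\binom{n-2m}{0}=1$, the factor $x^r$ equals $1$, and $a^r\ket0=\ket0$, so the double sum collapses to
\begin{align*}
(ax+a^\dagger y)^n\ket0=\sum_{m=0}^{\left[\frac{n}{2}\right]}\frac{n!(xy)^m}{m!(n-2m)!\,2^m}\,y^{n-2m}a^{\dagger n-2m}\ket0.
\end{align*}
Using $(xy)^m y^{n-2m}=x^m y^{n-m}$ then yields exactly the first line of \eqref{eqt2}. (One could instead establish this line in a self-contained way by induction on $n$, peeling off a single factor $ax+a^\dagger y$ and invoking Lemma~\ref{l3} to evaluate $a\,a^{\dagger p}\ket0$; but routing through Theorem~\ref{t1} is considerably shorter.)

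The remaining equalities are pure algebra. I would factor $y^n$ out of the sum, write $x^m y^{-m}=(-1)^m(\sqrt{-y/x})^{-2m}$, and then split $(\sqrt{-y/x})^{-2m}=(\sqrt{-y/x})^{-n}(\sqrt{-y/x})^{n-2m}$ so that the power of $a^\dagger$ can be absorbed into the argument as $(\sqrt{-y/x}\,a^\dagger)^{n-2m}$. Comparing the surviving sum with the explicit series
\begin{align*}
H_{e_n}(z)=n!\sum_{m=0}^{\left[\frac{n}{2}\right]}\frac{(-1)^m}{m!(n-2m)!\,2^m}\,z^{n-2m}
\end{align*}
identifies it with $H_{e_n}(\sqrt{-y/x}\,a^\dagger)$, while the accumulated prefactor simplifies via $y^n(\sqrt{-x/y})^n=(\sqrt{-xy})^n$. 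The only point needing care — and the place I expect to be error-prone rather than genuinely hard — is the consistent bookkeeping of the formal square roots $\sqrt{-y/x}$, $\sqrt{-x/y}$ and $\sqrt{-xy}$, making sure the branch choices are compatible so that $\sqrt{-y/x}\cdot\sqrt{-x/y}=1$ and the prefactor collapses as claimed.
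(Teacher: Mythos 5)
Your proposal is correct, and it takes a genuinely different route from the paper. The paper does \emph{not} derive this theorem from Theorem~\ref{t1}: it first computes $(ax+a^\dagger y)^k\ket0$ by hand for $k=1,\dots,4$, guesses the general form, and then proves the first line of \eqref{eqt2} by a self-contained induction on $n$, peeling off one factor $(ax+a^\dagger y)$ and using Lemma~\ref{l3} (i.e. $a\,a^{\dagger p}\ket0=p\,a^{\dagger p-1}\ket0$) to evaluate the resulting terms, with a parity split $k=2r$ versus $k=2r+1$ to absorb the boundary terms of the shifted sums. Your observation that the whole theorem is an immediate corollary of Theorem~\ref{t1} --- since $a^r\ket0=0$ kills every term of the inner sum except $r=0$, leaving
\begin{align*}
(ax+a^\dagger y)^n\ket0=\sum_{m=0}^{\left[\frac{n}{2}\right]}\frac{n!(xy)^m}{m!(n-2m)!\,2^m}\,y^{n-2m}a^{\dagger\, n-2m}\ket0
=n!\sum_{m=0}^{\left[\frac{n}{2}\right]}\frac{a^{\dagger\, n-2m}}{m!(n-2m)!\,2^m}\,x^m y^{n-m}\ket0,
\end{align*}
is both shorter and logically cleaner: it reuses the result the paper has already established at considerable effort and eliminates the even/odd case analysis entirely. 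What the paper's independent induction buys is only self-containedness (the vacuum result does not then rest on the full normally ordered expansion), at the cost of essentially repeating the structure of the Theorem~\ref{t1} induction in a special case. For the remaining equalities, the paper simply chains them inside the theorem statement without comment, whereas you spell out the rearrangement into the probabilists' Hermite series $H_{e_n}(z)=n!\sum_{m}\frac{(-1)^m}{m!(n-2m)!\,2^m}z^{n-2m}$; your flagged caveat about consistent branch choices for $\sqrt{-y/x}$, $\sqrt{-x/y}$, $\sqrt{-xy}$ is exactly the right thing to worry about, and it is a point the paper glosses over entirely.
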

\begin{proof}
To find the above result we use the hit and trial method as follows
\begin{align*}
(ax+a^\dagger y)\ket{0} & =ya^\dagger\ket0 \\ 
(ax+a^\dagger y)^2\ket{0} & = (ax+a^\dagger y)(ax+a^\dagger y)\ket{0} =(ax+a^\dagger y)ya^\dagger\ket0\\&=
y(xaa^\dagger +ya^{\dagger2})\ket0 = y(x+ya^{\dagger2})\ket0 \\ 
(ax+a^\dagger y)^3\ket{0} & = (ax+a^\dagger y)(ax+a^\dagger y)^2\ket{0} = (ax+a^\dagger y)y(x+ya^{\dagger2})\ket0 \\ & =y(x^2a+xya^\dagger+xyaa^{\dagger2} + y^2a^{\dagger3})\ket0 \\ & =
y(xya^\dagger + 2xya^\dagger+y^2a^{\dagger3})\ket0 = y^2(3x+ya^{\dagger2})a^\dagger\ket0 \\ 
(ax+a^\dagger y)^4\ket{0} & = (ax+a^\dagger y)(ax+a^\dagger y)^3\ket{0}  = 
(ax+a^\dagger y)y^2(3x+ya^{\dagger2})a^\dagger\ket0 \\ & =
y^2(3x^2a+3xya^\dagger+xyaa^{\dagger2}+y^2a^{\dagger3})a^\dagger\ket0 \\ & =
y^2(3x^2+3xya^{\dagger2}+3xya^{\dagger2}+y^2a^{\dagger4})\ket0 = y^2(3x^2+6xya^{\dagger2}+y^2a^{\dagger4})\ket0
\end{align*}
now generalizing the above result we get,
\begin{align}
(ax+a^\dagger y)^n\ket0 & =n!\sum_{m=0}^{\left[\frac{n}{2}\right]}\frac{a^{\dagger n-2m}}{m!(n-2m)!2^m}x^{m}y^{n-m}\ket0 
\end{align}
Now we prove the above result by PMI so let the above statement is denoted by $P(n)$.
\begin{itemize}
\item[Case - 1] $P(n)$ is true for $n=1$ as LHS = $ya^\dagger\ket0$ = RHS
\item[Case - 2] Let $P(n)$ is true for $n=k$ so,
$$(ax+a^\dagger y)^k\ket0 = k!\sum_{m=0}^{\left[\frac{k}{2}\right]}\frac{a^{\dagger k-2m}}{m!(k-2m)!2^m}x^{m}y^{k-m}\ket0$$
\item[Case - 3] Now we prove the result for $n=k+1$ if it is true for $n=k$ as follows
\begin{align*}
\text{P}(k+1) & =(ax+a^\dagger y)^{k+1}\ket0  = (ax+a^\dagger y)(ax+a^\dagger y)^k\ket0 \\ & =
(ax+a^\dagger y)k!\sum_{m=0}^{\left[\frac{k}{2}\right]}\frac{a^{\dagger k-2m}}{m!(k-2m)!2^m}x^{m}y^{k-m}\ket0 \\ & =
k!\sum_{m=0}^{\left[\frac{k}{2}\right]}\frac{x^{m}y^{k-m}}{m!(k-2m)!2^m}(ax+a^\dagger y)a^{\dagger k-2m}\ket0 \\ & =
k!\sum_{m=0}^{\left[\frac{k}{2}\right]}\frac{x^{m}y^{k-m}}{m!(k-2m)!2^m}(x(k-2m)a^{\dagger k-2m-1}+ ya^{\dagger k-2m+1})\ket0 \\ & =
\left(k!\sum_{m=0}^{\left[\frac{k}{2}\right]}\frac{x^{m+1}y^{k-m}}{m!(k-2m-1)!2^m}a^{\dagger k-2m-1} \right. \\ & + \left. k!\sum_{m=0}^{\left[\frac{k}{2}\right]}\frac{x^{m}y^{k-m+1}}{m!(k-2m)!2^m}a^{\dagger k-2m+1}\right)\ket0 \\ & =
\left(k!\sum_{m=1}^{\left[\frac{k}{2}\right]+1}\frac{x^{m}y^{k-m+1}}{(m-1)!(k-2m+1)!2^{m-1}}a^{\dagger k-2m+1} \right. \mbox{replacing $m$ by $m-1$}\\ & + \left. k!\sum_{m=0}^{\left[\frac{k}{2}\right]}\frac{x^{m}y^{k-m+1}}{m!(k-2m)!2^m}a^{\dagger k-2m+1}\right)\ket0 \\ & =
\left(y^{k+1}a^{\dagger k+1} +k!\sum_{m=1}^{\left[\frac{k}{2}\right]}\frac{x^{m}y^{k-m+1}}{m!(k-2m+1)!2^{m}}(2m+k-2m+1)a^{\dagger k-2m+1} \right. \\ & + \left.  \frac{k!x^{\left[\frac{k}{2}\right]+1}y^{k-\left[\frac{k}{2}\right]-1+1}}{\left[\frac{k}{2}\right]!(k-2\left(\left[\frac{k}{2}\right]+1\right)+1)!2^{\left[\frac{k}{2}\right]}}a^{\dagger k-2\left(\left[\frac{k}{2}\right]+1\right)+1}\right)\ket0 \\ & =\left\{
\begin{array}{lcl}
\left(y^{2r+1}a^{\dagger 2r+1} +(2r+1)!\sum_{m=1}^{r}\frac{x^{m}y^{2r-m+1}}{m!(2r-2m+1)!2^{m}}a^{\dagger 2r-2m+1} \right. \\ + \left.  \frac{(2r)!x^{r+1}y^{2r-r}}{r!(-1)!2^{r}}a^{\dagger-1}\right)\ket0~~~~~~~~~~~~~~~~~~~~\mbox{(for even $k=2r$)} \\
\left(y^{2r+2}a^{\dagger 2r+2} +(2r+2)!\sum_{m=1}^{r}\frac{x^{m}y^{2r+1-m+1}}{m!(2r+1-2m+1)!2^{m}}a^{\dagger 2r+1-2m+1} \right. \\  + \left.  \frac{(2r+1)!x^{r+1}y^{r+1}}{r!(2r+1-2\left(r+1\right)+1)!2^{r}}a^{\dagger 2r+1-2\left(r+1\right)+1}\right)\ket0~~\mbox{(for odd $k=2r+1$)}
\end{array} \right.
 \\ & =\left\{
\begin{array}{lcl}
\left(y^{2r+1}a^{\dagger 2r+1} +(2r+1)!\sum_{m=1}^{r}\frac{x^{m}y^{2r-m+1}}{m!(2r-2m+1)!2^{m}}a^{\dagger 2r-2m+1} \right)\ket0 \\ ~~~~~~\mbox{Since $1/(-1)!=0$}~~~~~~~~\mbox{(for even $k=2r$)} \\
\left(y^{2r+2}a^{\dagger 2r+2} +(2r+2)!\sum_{m=1}^{r}\frac{x^{m}y^{2r+2-m}}{m!(2r+2-2m)!2^{m}}a^{\dagger 2r+2-2m} \right. \\  + \left.  \frac{2(r+1)(2r+1)!x^{r+1}y^{r+1}}{(r+1)!2^{r+1}}\right)\ket0~~~~\mbox{(for odd $k=2r+1$)}
\end{array} \right.
 \\ & =\left\{
\begin{array}{lcl}
(2r+1)!\sum_{m=0}^{r}\frac{x^{m}y^{2r-m+1}}{m!(2r-2m+1)!2^{m}}a^{\dagger 2r-2m+1}\ket0 & \mbox{(for even $k=2r$)} \\
(2r+2)!\sum_{m=0}^{r+1}\frac{x^{m}y^{2r+2-m}}{m!(2r+2-2m)!2^{m}}a^{\dagger 2r+2-2m}\ket0  & \mbox{(for odd $k=2r+1$)}
\end{array} \right.
 \\ & =
 (k+1)!\sum_{m=0}^{\left[\frac{k+1}{2}\right]}\frac{a^{\dagger k+1-2m}}{m!(k+1-2m)!2^m}x^{m}y^{k+1-m}\ket0
\end{align*}
\end{itemize}
Hence $P(n)$ is true for $n=k+1$ if it is true for $n=k$. Thus by PMI \eqref{eqt2} is true $\forall n\in \mathcal{N}$.
\end{proof}
\begin{lemma}
$a^ja^{\dagger k} =\sum_{s=0}^{\text{min}(j,k)}\frac{j!k!}{s!(j-s)!(k-s)!}a^{\dagger k-s}a^{j-s}$
\label{eq:r4}
\end{lemma}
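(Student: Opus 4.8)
The plan is to fix $k$ and induct on $j$, using Lemma \ref{eq:r2} as the engine for pushing a single annihilation operator through a power of $a^\dagger$. Throughout I would adopt the convention that $1/m!=0$ whenever $m$ is a negative integer, exactly as the proofs of the earlier theorems already do; this lets me write each sum with generous limits and ignore the $\min(j,k)$ cutoff until the very end. The base case $j=1$ is immediate from Lemma \ref{eq:r2}, since $a\,a^{\dagger k}=a^{\dagger k}a+k\,a^{\dagger k-1}$ is precisely the $j=1$ instance of the stated formula (the $s=0$ and $s=1$ terms).

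For the inductive step I would assume the formula for $a^j a^{\dagger k}$ and multiply on the left by $a$, so that
\[
a^{j+1}a^{\dagger k}=\sum_{s}\frac{j!\,k!}{s!(j-s)!(k-s)!}\,a\,a^{\dagger k-s}a^{j-s}.
\]
The key move is to commute the leading $a$ past $a^{\dagger k-s}$ by writing $a\,a^{\dagger k-s}=a^{\dagger k-s}a+(k-s)\,a^{\dagger k-s-1}$, again from Lemma \ref{eq:r2}. This splits the expression into a ``diagonal'' piece $S_1$ carrying the monomial $a^{\dagger k-s}a^{j+1-s}$ and an ``off-diagonal'' piece $S_2$ carrying the extra factor $(k-s)$ together with $a^{\dagger k-s-1}a^{j-s}$.

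Next I would reindex $S_2$ by $s\mapsto s-1$, absorbing the linear factor via $(k-s)/(k-s)!=1/(k-s-1)!$, so that $S_1$ and $S_2$ are expressed in terms of the same operator monomial $a^{\dagger k-s}a^{j+1-s}$. Collecting the coefficient of this monomial then gives
\[
\frac{j!\,k!}{(k-s)!}\left[\frac{1}{s!(j-s)!}+\frac{1}{(s-1)!(j+1-s)!}\right]=\frac{(j+1)!\,k!}{s!(j+1-s)!(k-s)!},
\]
where the bracket collapses by the Pascal-type identity $\frac{1}{s!(j-s)!}+\frac{1}{(s-1)!(j+1-s)!}=\frac{j+1}{s!(j+1-s)!}$. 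This is exactly the claimed coefficient with $j$ replaced by $j+1$, which closes the induction. (By the symmetry of the statement under $a\leftrightarrow a^\dagger$, $j\leftrightarrow k$, one could equally induct on $k$ using Lemma \ref{eq:r3}.)

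The main obstacle, and the only place demanding care, is the bookkeeping of the summation limits when $S_1$ and $S_2$ are merged: the $s=0$ term is supplied only by $S_1$, the topmost term only by $S_2$, and the upper cutoff shifts from $\min(j,k)$ to $\min(j+1,k)$ according to whether $k\le j$ or $k>j$. The clean way to dispatch this without a case split is the negative-factorial convention above, under which any term whose $(k-s)!$, $(j-s)!$, or $(j+1-s)!$ would carry a negative argument vanishes automatically; the Pascal recombination is then valid termwise for every $s$, and the correct range $0\le s\le\min(j+1,k)$ emerges on its own.
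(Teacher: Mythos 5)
Your proof is correct, and it is tighter than the one in the paper. Both arguments run on the same engine---Lemma~\ref{eq:r2} is used to push a single $a$ leftward through $a^{\dagger k-s}$---but the paper never actually performs the induction you carry out: it computes $aa^{\dagger k}$, $a^2a^{\dagger k}$ and $a^3a^{\dagger k}$ explicitly, then asserts the general formula ``proceeding similarly $j$ times,'' and it splits the statement into three separately handled cases $j<k$, $j>k$, $j=k$. Your version supplies precisely what that sketch omits: the general inductive step, with the coefficient recombination $\frac{1}{s!(j-s)!}+\frac{1}{(s-1)!(j+1-s)!}=\frac{j+1}{s!(j+1-s)!}$ doing the work, and the negative-factorial convention removing any need for the case split, since the cutoff $\min(j+1,k)$ falls out of the vanishing factorials on its own. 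Incidentally, this is essentially how the paper proves the companion Lemma~\ref{eq:r5}: there it does write a genuine induction (on the power of $a$) with the same Pascal-type merge of two binomial-style coefficients, so your proof simply brings Lemma~\ref{eq:r4} up to the standard the paper itself applies one lemma later. What the paper's presentation buys is concreteness---the low-order computations make the normal-ordering pattern visible---at the cost of rigor; your argument is the one that would survive refereeing, and it also sidesteps two slips in the paper's displays (a spurious $(-1)^s$ in the concluding line of its Case~1 and a stray $q$ in its Case~3) that are artifacts of the compute-and-extrapolate style.
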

\begin{proof}
We have
\begin{eqnarray}
\begin{array}{rcl}
\label{eq4}
a a^{\dagger j}a^j{a^\dagger} & = & aa^{\dagger j}(j a^{j-1}+a^\dagger a^j)\,\,\,\mbox{(using \eqref{eq:e2})}\\
& = & jaa^{\dagger j}a^{j-1}+aa^{\dagger {j+1}}a^j\\
& = & j(j a^{\dagger{j-1}}+a^{\dagger j} a)a^{j-1} \\ & +& \{(j+1) a^{\dagger j}+a^{\dagger{j+1}} a\}a^j\,\,\,\mbox{(using \eqref{eq:e1})}\\
& = &  j^2 a^{\dagger {j-1}}a^{j-1} + (2j+1)a^{\dagger j}a^j + a^{\dagger {j+1}}a^{j+1}
\end{array}
\end{eqnarray}
To prove this lemma, we consider the following cases:
\begin{itemize}
\item[Case\,1:] If $j<k$, we have
\begin{eqnarray}
\begin{array}{rcl}
a a^{\dagger k} & = & k a^{\dagger{k-1}}+a^{\dagger k}a\\
& = & \sum_{s=0}^1\frac{1!k!}{s!(k-s)!(1-s)!}a^{\dagger k-s}a^{1-s}\\
\Rightarrow\,\,a^2a^{\dagger k} & = & k aa^{\dagger{k-1}}+aa^{\dagger k}a\\
& = & k\{(k-1) a^{\dagger{k-2}}+a^{\dagger k-1}a\}+(k a^{\dagger{k-1}}+a^{\dagger k}a)a\\
& = & k(k-1)a^{\dagger k-2}+2ka^{\dagger k-1}a+a^{\dagger k}a^2\\
& = & \sum_{s=0}^2\frac{2!k!}{s!(k-s)!(2-s)!}a^{\dagger k-s}a^{2-s}\\
\Rightarrow\,\,a^3a^{\dagger k}
& = & a\{k(k-1)a^{\dagger k-2}+2ka^{\dagger k-1}a+a^{\dagger k}a^2\}\\
& = & k(k-1)\{(k-2)a^{\dagger{k-3}}+a^{\dagger k-2}a\}\\ &+& 2k\{(k-1) a^{\dagger{k-2}}+a^{\dagger k-1}a\}a\\
& & +(k a^{\dagger{k-1}}+a^{\dagger k}a)a^2\\
& = & k(k-1)(k-2)a^{\dagger{k-3}}+3k(k-1)a^{\dagger k-2}a\\ & +& 3ka^{\dagger k-1}a^2+a^{\dagger k}a^3\\
& = & \sum_{s=0}^3\frac{3!k!}{s!(k-s)!(3-s)!}a^{\dagger k-s}a^{3-s} \\
& = & \ldots\,\,\mbox{(proceeding similarly $j$ times)}\\
\Rightarrow\,\,a^{j}a^{\dagger k} & = & \sum_{s=0}^{j}\frac{j!k!}{s!(j-s)!(k-s)!}(-1)^s a^{\dagger k-s} a^{j-s}
\label{eq:e3}
\end{array}
\end{eqnarray}

\item[Case\,2:] If $j>k$, then proceeding as in above and using \eqref{eq:e2}, we can find
\begin{eqnarray}
a^ja^{\dagger k} =\sum_{s=0}^{k}\frac{j!k!}{s!(j-s)!(k-s)!}a^{\dagger k-s}a^{j-s}
\label{eq:e4}
\end{eqnarray}

\item[Case\,3:] If $j=k$, then substituting $j=k$ in either of \eqref{eq:e3} or \eqref{eq:e4}, we get
\begin{eqnarray}
a^k a^{\dagger k} = \sum_{s=0}^{k}\frac{k!^2}{s!(qk-s)!^2}a^{\dagger k-s}a^{k-s}
\label{eq:e5}
\end{eqnarray}
\end{itemize}
which along with \eqref{eq:e3} and \eqref{eq:e4} proves that $a^j a^{\dagger k} =\sum_{s=0}^{\text{min}(j,k)}\frac{j!k!}{s!(j-s)!(k-s)!}a^{\dagger k-s}a^{j-s}$.
\end{proof}

\begin{lemma}
$a^{\dagger j}a^k = \sum_{s=0}^{\text{min}(j,k)}\frac{j!k!}{s!(j-s)!(k-s)!}(-1)^sa^{k-s}a^{\dagger j-s}$
\label{eq:r5}
\end{lemma}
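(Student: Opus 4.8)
The plan is to obtain Lemma~\ref{eq:r5} directly from the already-established Lemma~\ref{eq:r4} by a canonical substitution, rather than repeating a long induction. The key observation is that every identity in the preceding lemmas is a purely algebraic consequence of the single relation $[a,a^{\dagger}]=1$ together with associativity. Hence if I replace $a$ and $a^{\dagger}$ by any pair of operators $A,B$ satisfying $[A,B]=1$, the identity must persist verbatim. I would take $A=a^{\dagger}$ and $B=-a$, for which $[A,B]=[a^{\dagger},-a]=-[a^{\dagger},a]=[a,a^{\dagger}]=1$, so the substitution is legitimate.

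Applying $a\mapsto a^{\dagger}$, $a^{\dagger}\mapsto -a$ to Lemma~\ref{eq:r4}, the left-hand side $a^{j}a^{\dagger k}$ becomes $(a^{\dagger})^{j}(-a)^{k}=(-1)^{k}a^{\dagger j}a^{k}$, while a typical term $a^{\dagger k-s}a^{j-s}$ on the right becomes $(-a)^{k-s}(a^{\dagger})^{j-s}=(-1)^{k-s}a^{k-s}a^{\dagger j-s}$. Cancelling the common factor $(-1)^{k}$ from both sides and using $(-1)^{-s}=(-1)^{s}$ turns the sign-free coefficients of Lemma~\ref{eq:r4} into the alternating coefficients $(-1)^{s}\,\tfrac{j!k!}{s!(j-s)!(k-s)!}$, which is exactly the claimed expansion; note that the summation range $\min(j,k)$ is symmetric in $j,k$ and is therefore unaffected. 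The only thing that needs care here --- the main (and rather mild) obstacle --- is justifying the substitution rigorously, i.e. arguing that Lemma~\ref{eq:r4} holds for every canonical pair and not merely for $(a,a^{\dagger})$; this is immediate because its proof invoked nothing beyond $[a,a^{\dagger}]=1$ and the factorial recursions that follow from it.

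If one prefers to stay entirely within the inductive style of the paper, I would instead prove the statement by induction on $j$ with $k$ fixed. The base case $j=0$ is trivial, and in the inductive step I would multiply the hypothesis on the left by $a^{\dagger}$ and push it through each $a^{k-s}$ using the consequence $a^{\dagger}a^{m}=a^{m}a^{\dagger}-m\,a^{m-1}$ of Lemma~\ref{eq:r3}. Collecting like terms $a^{k-s'}a^{\dagger\,j+1-s'}$ produces the recursion $c'_{s'}=c_{s'}-(k-s'+1)\,c_{s'-1}$ for the coefficients, and the decisive computation is to verify via Pascal's rule $\binom{j}{s'}+\binom{j}{s'-1}=\binom{j+1}{s'}$ that this reproduces $(-1)^{s'}\tfrac{(j+1)!k!}{s'!(j+1-s')!(k-s')!}$. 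In this route the genuine bookkeeping difficulty is the boundary of the summation --- how $\min(j,k)$ behaves when $j$ increases past $k$ and how the spurious top term vanishes --- which I would handle exactly as in the three-case treatment of Lemma~\ref{eq:r4}, using the convention $1/(-1)!=0$.
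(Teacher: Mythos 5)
Your proposal is correct, and your primary route is genuinely different from the paper's. The paper proves Lemma~\ref{eq:r5} by a fresh induction: for the case $j>k$ it checks the base case $k=1$ (namely $a^{\dagger j}a=aa^{\dagger j}-ja^{\dagger j-1}$), multiplies the inductive hypothesis on the right by $a$, pushes $a$ through each $a^{\dagger j-s}$, re-indexes, and merges coefficients by a Pascal-type identity, declaring the remaining cases $j<k$ and $j=k$ ``similar''; your fallback induction (on $j$, multiplying on the left by $a^{\dagger}$ and using $a^{\dagger}a^{m}=a^{m}a^{\dagger}-ma^{m-1}$) is essentially the mirror image of this, and your coefficient recursion $c'_{s'}=c_{s'}-(k-s'+1)c_{s'-1}$ together with the Pascal step does close correctly. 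Your main argument --- viewing Lemma~\ref{eq:r4} as an identity valid for any pair of operators with $[A,B]=1$ and specializing to $A=a^{\dagger}$, $B=-a$, so that the overall factor $(-1)^{k}$ cancels and the residual $(-1)^{s}$ yields exactly the alternating coefficients --- is shorter and more structural: it requires no new induction and no three-case bookkeeping, and it explains why the antinormal expansion is the sign-alternating mirror of the normal one. The one point you should state more carefully is the justification that Lemma~\ref{eq:r4} holds for every canonical pair: the paper's own chain of proofs is not literally substitution-safe, because Lemma~\ref{eq:r3} is obtained there by taking the conjugate of Lemma~\ref{eq:r2}, and conjugation does not transport under your substitution (the adjoint of $B=-a$ is $-a^{\dagger}$, which is not $A$). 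This is easily repaired --- Lemma~\ref{eq:r3} also follows from Lemma~\ref{eq:r2} by a direct algebraic induction, or by your own substitution trick applied to the pair $(-a^{\dagger},a)$ --- after which every step leading to Lemma~\ref{eq:r4} is purely algebraic in the commutation relation and your specialization is legitimate.
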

\begin{proof}
Here we consider the following cases:
\begin{itemize}
\item[Case\,1:] If $j>k$, the relation is true for $k=1$ as $a^{\dagger j}a = a a^{\dagger j}-ja^{\dagger{j-1}}$. Let us assume the result is true for $k=l$.\\ Then
$a^{\dagger j}a^l = \sum_{s=0}^{\text{min}(j,l)}\frac{j!l!}{s!(j-s)!(l-s)!}(-1)^sa^{l-s}a^{\dagger j-s}$. Now
\begin{eqnarray}
\begin{array}{rcl}
& & a^{\dagger j}a^{l+1} \\
&=& a^{\dagger j}a^la \\
&=& \sum_{s=0}^{\text{min}(j,l)}\frac{j!l!}{s!(j-s)!(l-s)!}(-1)^sa^{l-s}a^{\dagger j-s}a \\
&=& \sum_{s=0}^{\text{min}(j,l)}\frac{j!l!}{s!(j-s)!(l-s)!}(-1)^sa^{l-s}\{aa^{\dagger {j-s}}-(j-s)a^{\dagger j-s-1}\}\\
& = & \sum_{s=0}^{\text{min}(j,l)} \frac{j!l!}{s!(j-s)!(l-s)!}(-1)^sa^{l-s+1}a^{\dagger j-s}
\\ & - & \sum_{s=0}^{\text{min}(j,l)}\frac{j!l!}{s!(j-s)!(l-s)!}(j-s)(-1)^sa^{l-s}a^{\dagger j-s-1}\\
& = & \sum_{s=0}^{\text{min}(j,l)} \frac{j!l!}{s!(j-s)!(l-s)!}(-1)^sa^{l-s+1}a^{\dagger j-s}
\\ & - & \sum_{s=0}^{\text{min}(j,l)}\frac{j!l!}{s!(j-s-1)!(l-s)!}(-1)^sa^{l-s}a^{\dagger j-s-1} \\
& = & \sum_{s=0}^{\text{min}(j,l)} \frac{j!l!}{s!(j-s)!(l-s)!}(-1)^sa^{l-s+1}a^{\dagger j-s}
\\ & - & \sum_{s=1}^{\text{min}(j,l+1)}\frac{j!l!}{(s-1)!(j-s)!(l-s+1)!}(-1)^{s-1}a^{l-s+1}a^{\dagger j-s} \\
& = & a^{l+1}a^{\dagger j} +\sum_{s=1}^{\text{min}{(j,l)}}\left\{\frac{j!l!}{s!(j-s)!(l-s)!} +\frac{j!l!}{(s-1)!(j-s)!(l-s+1)!}\right\} \\ &\times &
(-1)^sa^{l-s+1}a^{\dagger j-s}\\
& &  + \frac{j!}{(j-l-1)!}(-1)^{l+1}a^{\dagger j-l-1}\\
& =  &\sum_{s=0}^{\text{min}(j,l+1)}\frac{j!(l+1)!}{s!(j-s)!(l+1-s)!}(-1)^sa^{l+1-s}a^{\dagger j-s}.
\end{array}
\end{eqnarray}
\end{itemize}
Thus if the result is true for $k=l$ then it holds for $k=l+1$. Hence by the Mathematical induction method, the result is true for all $k$. Other cases like $j<k$ or $j=k$ can be proved similarly.
\end{proof}

\begin{lemma}
$a^{\dagger j}a^k\ket{m} = \frac{\sqrt{m!(m-k+j)!}}{(m-k)!}\ket{m-k+j}$, where $\ket{m}$ is the usual Fock state with $m$ number of photons.

\end{lemma}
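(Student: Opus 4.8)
The plan is to prove the identity by letting the two blocks of operators act in turn on the Fock state, using only the elementary ladder relations $a\ket{m}=\sqrt{m}\,\ket{m-1}$ and $a^{\dagger}\ket{m}=\sqrt{m+1}\,\ket{m+1}$, which fix the action of a single annihilation or creation operator. Since $a^{\dagger j}a^{k}$ is already in normal order (all creation operators sitting to the left of the annihilation operators), no commutator manipulation is needed: I can simply apply $a^{k}$ first and then hit the resulting state with $a^{\dagger j}$.

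First I would iterate the annihilation relation $k$ times. Each application of $a$ lowers the photon number by one and multiplies by the square root of the current occupation, so that
\begin{align*}
a^{k}\ket{m}=\sqrt{m}\,\sqrt{m-1}\cdots\sqrt{m-k+1}\,\ket{m-k}=\sqrt{\frac{m!}{(m-k)!}}\,\ket{m-k}.
\end{align*}
Next I would apply the creation operator $j$ times to $\ket{m-k}$; each step raises the photon number and multiplies by the square root of the new occupation, giving
\begin{align*}
a^{\dagger j}\ket{m-k}=\sqrt{(m-k+1)(m-k+2)\cdots(m-k+j)}\,\ket{m-k+j}=\sqrt{\frac{(m-k+j)!}{(m-k)!}}\,\ket{m-k+j}.
\end{align*}
Multiplying the two prefactors collapses the denominators into $(m-k)!$ and yields the coefficient $\sqrt{m!\,(m-k+j)!}/(m-k)!$, which is exactly the claimed expression. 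Each of the two telescoping products can be made rigorous by a one-line induction on $k$ (respectively $j$) in the spirit of the PMI arguments used earlier in the paper, but the pattern is transparent enough that writing out the product explicitly suffices.

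The one point that needs care is the boundary case $k>m$, where physically $a^{k}\ket{m}=0$ because the state is exhausted once the photon number drops below zero. In that regime $(m-k)!$ carries a non-positive integer argument, and the compact formula remains valid provided one adopts the convention $1/(-n)!=0$ for positive integers $n$ that the paper already invokes elsewhere; this forces the stated coefficient to vanish, consistent with the operator annihilating the state. I would flag this convention explicitly so that the closed form holds for all $m,j,k\in\mathcal{N}$ rather than only for $k\le m$. Apart from this bookkeeping, the argument is entirely routine, so the only genuine ``obstacle'' is presenting the two square-root products cleanly, not any conceptual difficulty.
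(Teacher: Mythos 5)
Your proof is correct and takes essentially the same route as the paper's: apply $a^{k}$ first via the ladder relation to obtain $\sqrt{m!/(m-k)!}\,\ket{m-k}$, then apply $a^{\dagger j}$ to pick up the factor $\sqrt{(m-k+j)!/(m-k)!}$, and multiply the prefactors. Your explicit treatment of the boundary case $k>m$ via the convention $1/(-n)!=0$ is a minor sharpening the paper omits in this lemma (though it uses that convention elsewhere), but it does not change the substance of the argument.
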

\begin{proof}
\begin{eqnarray}
\begin{array}{rcl}
& & a^{\dagger j}a^k\ket{m}\\
& = & a^{\dagger j}a^{k-1}\sqrt{m}\ket{m-1}\\
& = & a^{\dagger j}a^{k-2}\sqrt{m(m-1)}\ket{m-2} \\
&=& \ldots\,\,\text{(proceeding $k$ times)} \\
&=& a^{\dagger j}\sqrt{m(m-1)\ldots (m-k+1)}\ket{m-k} \\
&=& \sqrt{\frac{m!}{(m-k)!}}a^{\dagger j}\ket{m-k} \\
&=& \sqrt{\frac{m!}{(m-k)!}}\sqrt{m-k+1}a^{\dagger j-1}\ket{m-k+1} \\
&= &\sqrt{\frac{m!}{(m-k)!}}\sqrt{(m-k+1)(m-k+2)}a^{\dagger j-2}\ket{m-k+2} \\
&=&\ldots\,\,\text{(proceeding $j$ times)} \\
&=&  \sqrt{\frac{m!}{(m-k)!}}\sqrt{(m-k+1)(m-k+2)\ldots (m-k+j)}\\ & \times & \ket{m-k+j}
 \\ &=&\sqrt{\frac{m!}{(m-k)!}}\sqrt{\frac{(m-k+j)!}{(m-k)!}}\ket{m-k+j} \\
&=& \frac{\sqrt{m!(m-k+j)!}}{(m-k)!}\ket{m-k+j}
\label{eq:r8}
\end{array}
\end{eqnarray}
\end{proof}

\begin{lemma}
$a^{j}a^{\dagger k}\ket{m} = \frac{(m+k)!}{\sqrt{m!(m+k-j)!}}\ket{m+k-j}$, where $\ket{m}$ is the usual Fock state with $m$ number of photons.
\label{eq:r9}
\end{lemma}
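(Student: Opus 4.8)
The plan is to act directly on the Fock basis, mirroring the proof of the preceding lemma (for $a^{\dagger j}a^k\ket{m}$) but with the roles of the raising and lowering operators interchanged. The only ingredients needed are the elementary ladder relations $a^\dagger\ket{n}=\sqrt{n+1}\ket{n+1}$ and $a\ket{n}=\sqrt{n}\ket{n-1}$, so no appeal to the normal-ordering expansions of Lemmas~\ref{eq:r4}--\ref{eq:r5} is required; the result follows from a two-stage application of the ladder operators.

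First I would push the block of $k$ creation operators through $\ket{m}$. Applying $a^\dagger$ repeatedly raises the occupation number one rung at a time and accumulates the product $\sqrt{(m+1)(m+2)\cdots(m+k)}$, so that $a^{\dagger k}\ket{m}=\sqrt{(m+k)!/m!}\,\ket{m+k}$. This step is a clean induction on $k$, identical in spirit to Case~3 of Lemma~\ref{l3}.

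Next I would let the block of $j$ annihilation operators act on $\ket{m+k}$. Each $a$ lowers the excitation number and produces a factor equal to the square root of the current occupation, giving $a^{j}\ket{m+k}=\sqrt{(m+k)!/(m+k-j)!}\,\ket{m+k-j}$. Multiplying the two accumulated prefactors collapses the square roots to $\frac{(m+k)!}{\sqrt{m!(m+k-j)!}}$, which is exactly the claimed coefficient in front of $\ket{m+k-j}$, completing the derivation.

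I do not anticipate a genuine obstacle here, since the computation is routine ladder-operator bookkeeping. The one point to state carefully is the admissibility condition $j\le m+k$, needed so that the repeated lowering never annihilates the state prematurely and so that $(m+k-j)!$ is well defined; when $j>m+k$ both sides vanish and the identity holds trivially. Keeping the index shifts and the square-root factors aligned through the two passes is the only place where a slip could occur.
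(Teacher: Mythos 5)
Your proposal is correct and matches the paper's intended argument: the paper's proof simply says to proceed as in the preceding lemma, i.e., apply the ladder relations sequentially with the creation operators acting on $\ket{m}$ first and the annihilation operators second, which is exactly your two-stage computation, and your prefactor bookkeeping $\sqrt{(m+k)!/m!}\cdot\sqrt{(m+k)!/(m+k-j)!}=\frac{(m+k)!}{\sqrt{m!(m+k-j)!}}$ is right. Your explicit remark on the admissibility condition $j\le m+k$ is a small point of care that the paper leaves implicit.
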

\begin{proof}
Proceeding similarly as in \eqref{eq:r8}, we can find this also.
\end{proof}
\section{Discussion}
\label{sec3}

In summary, we have followed the Mathematical induction process and direct or hit trial method to deal with the operator ordering problem. As one can see, our derivation is straightforward and more concise. These relationships can be used to calculate the expected values of any arbitrary product of annihilation and creation operators, with respect to any quantum state $\ket\psi$ \cite{deepak,pmng,dangnc} as well as to study the atom-cavity field system's dynamics in interacting Fock space \cite{pkd}. The results discussed above also facilitate an analytical understanding of different nonclassicality features, namely HOA, HOSPS, HOS, etc. These theorems also used in calculations of fidelity of quantum teleportation \cite{np1} Also, the two theorems stated above are very much useful in the calculation of the expectation of normal order $a^{\dagger m}a^n$ for the quantum states which are obtained by operating the squeezing operator and other states. Also, these theorems allow us to convert the $n^{\text{th}}$ power of linear combinations of Bosonic operators in the product of Bosonic operators in normal order.

{ In conclusion, the study of Bosonic operators and systems in quantum mechanics is a fundamental and essential part of modern physics. Bosonic operators are used to represent the creation and annihilation of bosonic particles, such as photons, phonons, and atomic nuclei, and are essential for describing the behavior of many-body systems in quantum mechanics. The general expansion of natural power of the linear combination of Bosonic operators in normal order provides a powerful mathematical tool for analyzing and understanding the behavior of bosonic systems. The expansion can be expressed in terms of binomial coefficients and the product of normal-ordered operators, and it can be used to calculate correlation functions and the partition function of complex many-body systems. The basic theoretical tools and methods used in the study of Bosonic operators and systems include creation and annihilation operators, commutation relations, normal ordering, Wick's theorem, second quantization, and path integrals. These tools and methods provide a powerful framework for understanding and analyzing the behavior of bosonic systems in a variety of physical contexts. Overall, the study of Bosonic operators and systems is an important and active area of research in modern physics, with applications in fields such as condensed matter physics, atomic physics, and quantum field theory.}

\begin{center}
\textbf{ACKNOWLEDGEMENT}
\end{center}
Deepak acknowledges the financial support by the Council of Scientific and Industrial Research (CSIR), Govt. of India (Award no. 09/1256(0006) /2019-EMR-1).

\end{document}